\newtheorem{theorem}{Theorem}
\newtheorem{lemma}{Lemma}
\newtheorem{definition}{Definition}
\newtheorem{proposition}{Proposition}
\newcommand{\dist}{\textsf{dist}}
\newcommand{\NP}{\textsf{NP}}
\newcommand{\coNP}{\textsf{coNP}}
\newcommand{\FPT}{\textsf{FPT}}
\newcommand{\yes}{\textsc{Yes}}
\newcommand{\no}{\textsc{No}}
\newcommand{\calA}{\mathcal{A}}
\newcommand{\calB}{\mathcal{B}}
\newcommand{\calO}{\mathcal{O}}
\newcommand{\calQ}{\mathcal{Q}}
\newcommand{\calR}{\mathcal{R}}
\newcommand{\defproblem}[3]{
  \vspace{1mm}
\noindent\fbox{
  \begin{minipage}{0.96\textwidth}
  \begin{tabular*}{\textwidth}{@{\extracolsep{\fill}}lr} #1 \\ \end{tabular*}
  {\bf{Input:}} #2  \\
  {\bf{Question:}} #3
  \end{minipage}
  }
  \vspace{1mm}
}
\begin{document}

\begin{frontmatter}



\title{Sparsification Lower Bound for Linear Spanners in Directed Graphs}


\author{Prafullkumar Tale \fnref{label2}}
\ead{prafullkumar.tale@cispa.saarland}
\affiliation{organization={CISPA Helmholtz Center for Information Security},
            city={Saarbr$\ddot{u}$cken},
            postcode={66123}, 
            state={Saarland},
            country={Germany}}

\fntext[label2]{This research is a part of a project that has received funding from the European Research Council (ERC) under the European Union's Horizon $2020$ research and innovation programme under grant agreement SYSTEMATICGRAPH (No. $725978$).}

\begin{abstract}
For $\alpha \ge 1$, $\beta \ge 0$,  and a graph $G$, a spanning subgraph $H$ of $G$ is said to be an $(\alpha, \beta)$-spanner if $\dist(u, v, H) \le \alpha \cdot \dist(u, v, G) + \beta$ holds for any pair of vertices $u$ and $v$.
These type of spanners, called \emph{linear spanners},  generalizes \emph{additive spanners} and \emph{multiplicative spanners}.
Recently,  Fomin,  Golovach,  Lochet,  Misra,  Saurabh,  and Sharma initiated the study of additive and multiplicative spanners for directed graphs (IPEC $2020$).
In this article,  we continue this line of research and prove that \textsc{Directed Linear Spanner} parameterized by the number of vertices $n$ admits no polynomial compression of size $\calO(n^{2 - \epsilon})$ for any $\epsilon > 0$ unless $\NP \subseteq \coNP/poly$.
We show that similar results hold for \textsc{Directed Additive Spanner} and \textsc{Directed Multiplicative Spanner} problems.
This sparsification lower bound holds even when the input is a directed acyclic graph and $\alpha,  \beta$ are \emph{any} computable functions of the distance being approximated.
\end{abstract}



\begin{keyword}
Additive Spanners \sep Multiplicative Spanners  \sep Sparsification  \sep Directed Graphs



\end{keyword}

\end{frontmatter}


\section{Introduction}
For a graph or a digraph, a spanner is its subgraph that preserves lengths of shortest paths between any two pair of vertices in it up to some additive and/or multiplicative error.
Spanners can be classified as additive spanners, multiplicative spanners, or linear or mixed spanners depending on the type of error is allowed.
We refer readers to the recent survey by Ahmed et al.~\cite{ahmed2020graph} for motivations, applications, and literature regarding this topic.

A subgraph $H$ of $G$ is its \emph{spanning subgraph} if $V(H) = V(G)$.
We use $\dist(u, v, G)$ to denote the shortest distance between $u$ and $v$ in $G$.
For $\alpha \ge 1$, $\beta \ge 0$, a spanning subgraph $H$ of $G$ is said to be \emph{$(\alpha, \beta)$-spanner} if $\dist(u, v, H) \le \alpha \cdot \dist(u, v, G) + \beta$ holds for any pair of vertices $u$ and $v$.
These type of spanners are called \emph{linear spanners}.
\emph{Additive spanners} and \emph{multiplicative spanners} are $(1, \beta)$-spanners and $(\alpha, 0)$-spanners, respectively, for some $\beta \ge 1$ and $\alpha > 1$.
Thorup and Zwick~\cite{thorup2006spanners} considered spanners with additive error terms that are sub-linear in the distance being approximated.
They constructed a spanning subgraph $H$ such that $\dist(u, v, H) \le \dist(u, v, G) + f_{\beta}(\dist(u, v, G))$ where $f_{\beta}$ is a sub-linear function of the form $f_{\beta}(d) = c \cdot d^{1/(1 - q)}$ for some constants $c$ and $q \ge 2$.

As any graph is a spanner for itself, a non-trivial question is to find a spanner with as few edges as possible.
Liestman and Shermer \cite{liestman1993additive} proved that for every fixed $\beta \ge 1$, given a graph $G$ and a positive integer $k$, it is \textsf{NP-Complete} to decide whether $G$ admits a $(1, \beta)$-spanner (also called \emph{additive $\beta$-spanner}) with at most $|E(G)| - k$ edges.
It is also known that deciding whether $G$ has an $(\alpha, 0)$-spanner (also called \emph{multiplicative $\alpha$-spanner}) with at most $|E(G)| - k$ edges is \textsf{NP-Complete} for every fixed $\alpha \ge 2$ (\cite{cai1994np}, \cite{peleg1989graph}).



Recently, the problem of finding optimum additive, multiplicative, and linear spanners for undirected graph have been studied from the Parameterized Complexity framework.
In this framework,  we measure the computational complexity as a function of the input size of a problem and a secondary measure.
We define relevant notations in Section~\ref{sec:prelims}.
Kobayashi proved that \textsc{Multiplicative Spanner} admits a polynomial kernel of size $\calO(k^2 \alpha^2)$ \cite{kobayashi2018np}, and \textsc{Additive Spanner}, \textsc{Linear Spanner} problems are fixed parameter tractable \cite{kobayashi2020fptalgo}.
Fomin et al. \cite{fomin2020parameterized} started the study of additive and multiplicative spanners for directed graphs.
They proved that \textsc{Directed Multiplicative Spanner} admits a kernel of size $\calO(k^4\alpha^5)$ whereas \textsc{Directed Additive Spanner} is not fixed parameter tractable under a widely believed conjecture.

We continue this line of research and present a sparsification lower bound for these problems.
We remark that the problem of finding a spanner (not necessarily an optimum spanner) becomes `easier' as the error factor increases (see Related Work below).
To present a lower bound for a general case, we consider the spanners that allow the error as \emph{any} computable function of the distance being approximated.
Note that such a formulation is considered by Thorup and Zwick~\cite{thorup2006spanners} for undirected graphs.
We highlight that unlike in their case, where only sub-linear functions are allowed, the following formulation allows any computable function. 


\defproblem{\textsc{Directed Linear Spanner}}{Digraph $D$, two monotonically non-decreasing computable functions $f_{\alpha}, f_{\beta}: \mathbb{N} \rightarrow \mathbb{R}_{\ge 0}$, and a positive integer $k$.}{Does there exist a spanning subgraph $H$ of $D$ with at most $|A(D)| - k$ arcs such that $\dist(u, v, H) \le f_{\alpha}(\dist(u, v, D)) \cdot \dist(u, v, D) + f_{\beta}(\dist(u, v, D))$ holds for any pair of vertices $u$ and $v$?}

In this article, we investigate the problem from the perspective of polynomial-time \emph{sparsification}: the method of reducing an input instance to an equivalent object that needs fewer bits to encode.
For example, if input instance comprises a graph or a CNF-formula then the aim is to find an equivalent graph or CNF-formula in which ratio of edges to vertices or clauses to variable is smaller than the original instance.
Consider an instance $(D, f_{\alpha}, f_{\beta}, k)$ of \textsc{Directed Linear Spanner} problem.
We assume, throughout the article, that any function given as a part of input can be encoded using the constant number of bits. 
Hence, this instance can be encoded with $\calO(|V(D)|^2)$ bits as for any non-trivial instance $k \le |A(D)|$. 
The goal of sparsification is to examine existence of a polynomial time algorithm that given an instance $(D, f_{\alpha}, f_{\beta}, k)$, maps it to an equivalent instance of \emph{any} problem that uses $\calO(|V(D)|^{2-\epsilon})$ bits for some $\epsilon > 0$.
We answer this question in the negative.

\begin{theorem}
\label{thm:dir-linear-spanner-sparsification}
Consider two monotonically non-decreasing computable functions $f_{\alpha}, f_{\beta}: \mathbb{N} \rightarrow \mathbb{R}_{\ge 0}$ such that $1 \le f_{\alpha}(1)$ and $2 \le f_{\alpha}(1) + f_{\beta}(1)$.
Unless \emph{$\NP \subseteq \coNP/poly$}, an arbitrary instance $(D, f_{\alpha}, f_{\beta}, k)$ of {\sc Directed Linear Spanner} does not admit a polynomial compression of size $\calO(|V(D)|^{2 - \epsilon})$ for any $\epsilon > 0$, even when $D$ a directed acyclic graph.
\end{theorem}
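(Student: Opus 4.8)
The plan is to establish the lower bound through the cross-composition machinery of Dell and van Melkebeek, in its fine-grained (degree-two) form. Concretely, I would exhibit a \emph{weak $2$-cross-composition} of a suitable \NP-hard language $L$ into {\sc Directed Linear Spanner} parameterized by $n = |V(D)|$: a polynomial-time algorithm that receives $t$ instances $x_1, \dots, x_t$ of $L$, each of bitsize at most $s$, and outputs a single instance $(D, f_\alpha, f_\beta, k)$ that is a \yes-instance if and only if at least one $x_i$ is a \yes-instance, while guaranteeing $n = |V(D)| = \mathcal{O}(\sqrt{t} \cdot \mathrm{poly}(s))$ and keeping $D$ acyclic. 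Given such a composition, a hypothetical compression of bitsize $\mathcal{O}(n^{2-\epsilon})$ would compress the OR of $t$ instances of $L$ into $\mathcal{O}(t^{1-\epsilon/2}\,\mathrm{poly}(s))$ bits, which by the complementary-witness/cross-composition framework forces $\NP \subseteq \coNP/poly$. The parameter budget $\sqrt{t}$ is exactly what converts an $\mathcal{O}(n^{2-\epsilon})$ compression into a nontrivial OR-sparsification, so hitting the $\sqrt{t}$ vertex bound (rather than a cruder $\mathrm{poly}(t)$ bound) is essential.

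For the gadgetry I would exploit the hypotheses $1 \le f_\alpha(1)$ and $2 \le f_\alpha(1) + f_\beta(1)$ directly: for a pair $u,v$ at distance $1$ in $D$ the spanner constraint permits $\dist(u,v,H) \le f_\alpha(1) + f_\beta(1)$, i.e.\ it tolerates a detour of length at least $2$. Thus an arc $(u,v)$ equipped with a length-$2$ bypass $u \to w \to v$ may be deleted, whereas an arc that is the \emph{unique} $u$-$v$ path (a bridge in the reachability sense) can never be deleted, no matter how the functions are chosen. Deletions therefore encode combinatorial choices, and the budget $k$ counts how many such choices must be made. To realise the quadratic packing I would index the $t$ instances by a grid $[\sqrt t] \times [\sqrt t]$, introduce $\mathcal{O}(\sqrt t)$ \emph{row-selector} and \emph{column-selector} vertices whose incident optional arcs pick one row $i^\star$ and one column $j^\star$, and attach a verification sub-DAG of size $\mathrm{poly}(s)$ that admits the prescribed number of deletions precisely when the instance indexed by $(i^\star, j^\star)$ is a \yes-instance. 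Acyclicity is obtained by laying the whole construction out in topological layers (selectors, then verifier, then sinks).

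The robustness to \emph{arbitrary} computable $f_\alpha, f_\beta$ is the feature that makes the choice of detour length crucial: since the construction only ever compares vertices at distance $1$ in $D$ and realises every intended bypass with a path of length exactly $2$, the sole property of the functions that the argument uses is $f_\alpha(1) + f_\beta(1) \ge 2$. Mandatory arcs are made genuine reachability bridges, so even if $f_\alpha(1) + f_\beta(1)$ is astronomically large there is simply \emph{no} alternative path of any length and the arc must stay; this is what lets a single construction work uniformly across all admissible functions while keeping $D$ a DAG.

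I expect the main obstacle to be controlling \emph{unintended} short paths. Because the spanner condition is global, deleting optional arcs and retaining their bypasses, together with the selector arcs, can create accidental length-$2$ (or longer, when $f_\alpha(1)+f_\beta(1)$ is large) connections that either let a critical bridge be circumvented or let a verifier accept a \no-instance. The crux of the completeness and soundness proofs is therefore to design the selector and verifier gadgets so that the only deletions achievable within the budget $k$ correspond to selecting one grid cell and certifying it, while simultaneously respecting the $\mathcal{O}(\sqrt t)$ vertex ceiling; balancing this interference-freeness against the vertex economy is where the real work lies.
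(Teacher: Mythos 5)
There is a genuine gap: your proposal is a plan for a weak $2$-cross-composition, but the composition itself --- the row/column selector gadgets, the verification sub-DAG, and the argument that the whole construction fits into $\mathcal{O}(\sqrt{t}\cdot\mathrm{poly}(s))$ vertices while remaining interference-free --- is never constructed. You explicitly flag the control of unintended short paths and the soundness/completeness of the selectors as ``where the real work lies,'' which is an accurate self-assessment: for a degree-two composition those are precisely the steps that cannot be waved at, because the $\sqrt{t}$ vertex budget forces heavy reuse of vertices across instances and the spanner condition is global over all pairs. Without those gadgets the claimed bound $n = \mathcal{O}(\sqrt{t}\cdot\mathrm{poly}(s))$ is an aspiration, not a proof. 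Your one concrete structural insight --- that an arc with a length-$2$ bypass is deletable under the hypothesis $f_\alpha(1)+f_\beta(1)\ge 2$ while a reachability bridge never is, regardless of the functions --- is correct and is indeed the mechanism the paper exploits; but it is the easy half of the argument.

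The paper avoids building any new composition. It gives a polynomial-time reduction from \textsc{Dominating Set} (on $G$, budget $l$) to \textsc{Directed Linear Spanner} that uses only $\mathcal{O}(|V(G)|)$ vertices: four copies of $V(G)$ plus an apex $w$, with the arcs $(w,B[i])$ and $(w,R_c[i])$ playing exactly the roles of your ``optional'' arcs (each has a bypass of length $\lfloor t\rfloor$ where $t=f_\alpha(1)+f_\beta(1)$, realized by subdividing or contracting the $R_c$-to-$R_r$ arcs) and all other arcs being bridges. Setting $k = 2|V(G)|-l$ makes the deletable-arc count encode a dominating set. Since the number of vertices grows only linearly, an $\mathcal{O}(|V(D)|^{2-\epsilon})$ compression for the spanner problem would yield an $\mathcal{O}(|V(G)|^{2-\epsilon})$ compression for \textsc{Dominating Set}, contradicting the known sparsification lower bound of Jansen and Pieterse (Proposition~\ref{prop:dom-set-sparsification}). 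In effect, the degree-two cross-composition you propose to build from scratch has already been done once and for all for \textsc{Dominating Set}; the transfer only requires a vertex-linear reduction, which is a far smaller obligation than the one your plan takes on.
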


We justify the condition on the sum of the values of two functions at $1$ in Section~\ref{sec:domset-to-dir-add-spanner}.
Consider special case when $f_{\alpha}(d) = {\alpha}$ and $f_{\beta}(d) = {\beta}$ for all $d \in \mathbb{N}$ for some non-negative constants $\alpha, \beta$ such that $\alpha + \beta \ge 2$.
For $\alpha = 1$ and $\beta = 0$, Theorem~\ref{thm:dir-linear-spanner-sparsification} implies, respectively, that \textsc{Directed Additive Spanner} and \textsc{Directed Multiplicative Spanner} do not admit polynomial compressions of size $\calO(|V(D)|^{2 - \epsilon})$ for any $\epsilon > 0$ unless $\NP \subseteq \coNP/poly$. 


We remark that Theorem~\ref{thm:dir-linear-spanner-sparsification} \emph{does not} imply that there are digraphs which do not have the spanners that satisfy the properties mentioned in the definition of the problem.
Rather, it implies that to find an optimum linear spanner, one needs information about almost all the arcs in the digraph.
And hence, it is not possible to compress the instance in polynomial time and with non-trivial number of bits without solving it. 
Considering theoretical and practical applications of spanners, we believe Theorem~\ref{thm:dir-linear-spanner-sparsification} provides a non-trivial lower bound.

\paragraph*{Related Work}
It is known that all undirected graphs have additive $2$-spanners with $\calO(n^{1.5})$ edges (\cite{elkin20041plusepsilon}, \cite{aingworth1999fast}, \cite{knudsen2014additive}), additive $4$-spanners with $\calO(n^{1.4})$ edges (\cite{chechik2013new}, \cite{bodwin2020some}), and additive $6$-spanners with $\calO(n^{1.33})$ edges (\cite{baswana2010additive}, \cite{woodruff2010additive}).
Abboud and Bodwin~\cite{abboud20174} proved that for $0 < \epsilon < 1/3$, one cannot compress an input graph into $\calO(n^{1+\epsilon})$ bits, so that one can recover distance information for each pair of vertices within $n^{o(1)}$ additive error. 
A well-known trade-off between the sparsity and the multiplicative factor is:
for any positive integer $\alpha$ and any graph $G$, there is a multiplicative $(2\alpha - 1)$-spanner with $\calO(n^{1 + 1/\alpha})$ edges~\cite{althofer1993sparse}.
This bound is conjectured to be tight based on the popular Girth Conjecture of Erd\H{o}s~\cite{erdos1964theory}.

The general quest for sparsification algorithms is motivated by the fact that they allow instances to be stored, manipulated, and solved more efficiently.
As sparsification preserves the exact answer to the problem, it suffices to solve the sparsified instance.
The notion is fruitful in theoretical \cite{impagliazzo2001problems} and practical \cite{eppstein1997sparsification} settings.
The growing list of problems for which the existence of non-trivial sparsification algorithms has been ruled out under the same assumption includes \textsc{Vertex Cover} \cite{dell2014satisfiability}, \textsc{Dominating Set} \cite{jansen2017sparsification}, \textsc{Feedback Arc Set} \cite{jansen2017sparsification}, \textsc{Treewidth} \cite{jansen2015sparsification}, \textsc{List $H$-Coloring} \cite{chen2020sparsification}, and \textsc{Boolean Constraint Satisfaction} problems \cite{chen2020best}.

\paragraph*{Organization of the article}
We organize the remaining article as follows.
In Section~\ref{sec:prelims}, we present some preliminaries.
In Section~\ref{sec:domset-to-dir-add-spanner},  we present a parameter preserving reduction from \textsc{Dominating Set} to \textsc{Directed Linear Spanner}.
We use this reduction to present a proof of Theorem~\ref{thm:dir-linear-spanner-sparsification}.
We conclude this article in Section~\ref{sec:conclusion}.
\section{Preliminaries}
\label{sec:prelims}

We denote the set of positive integers and the set of non-negative real numbers by $\mathbb{N}$ and $\mathbb{R}_{\ge 0}$, respectively.
For a positive integer $q$,  we denote set $\{1, 2,\dots,  q\}$ by $[q]$.

We consider graphs and directed graphs with a finite number of vertices that do not have loops or multiple edges/arcs as they are irrelevant for distances.
For an undirected graph $G$,  by $V(G)$ and $E(G)$ we denote the set of vertices and edges of $G$ respectively.
Two vertices $u, v$ are said to be \emph{adjacent} in $G$ if there is an edge $(u, v) \in E(G)$.
The \emph{neighborhood} of a vertex $v$, denoted by $N_G(v)$, is the set of vertices adjacent to $v$.
The \emph{closed neighborhood} of a vertex is $N_G[v] = N_G(v)\cup \{v\}$.
We say $u \in V(G)$ \emph{dominates} $v \in V(G)$ if $v$ is in $N[u]$.
A set $X \subseteq V(G)$ is a \emph{dominating set} of $G$ if $V(G) = N_G[X]$.
For a {\em directed graph (or digraph)} $D$,  by $V(D)$ and $A(D)$ we denote the sets of vertices and directed arcs in $D$,  respectively.
For $F \subseteq A(D)$, $D - F$ is the graph obtained by deleting arcs in $F$ from $D$.

A \emph{directed path} $P$ in $D$ is an ordered sequence of vertices $\langle v_1,  v_2, \dots,  v_q\rangle$ such that $(v_i,  v_{i + 1}) \in A(D)$ for all $i \in [q - 1]$.
We denote the set of arcs $\{(v_i,  v_{i + 1})\ |\ i \in [q - 1]\}$ by $A(P)$.
The length of directed path $P$ is $|A(P)|$.
For two vertices $u, v \in V(D)$,  $\dist(u, v, D)$ denotes the length of a shortest directed path from $u$ to $v$.
If there is no directed path from $u$ to $v$, then we assign $\dist(u, v, D) = +\infty$.
Note that $\dist(u, v, D)$ may not be equal to $\dist(v, u, D)$.
Consider two directed paths $P_1 = \langle v_1,  v_2,  \dots,  v_q \rangle$ and $P_2 = \langle u_1,  u_2,  \dots,  u_p \rangle$.
If $v_q = u_1$ then we denote directed path $P = \langle v_1,  \dots, v_q = u_1, \dots u_p \rangle$ by $P_1 \circ P_2$.
A digraph $D$ is said to be \emph{acyclic} if the following statement is true for any two distinct vertices: 
if there is a directed path from $u$ to $v$ then there is no directed path from $v$ to $u$. 

A {\em subdivision} of an arc $(u, v) \in A(D)$ is an operation that deletes arc $(u, v)$, adds a vertex $w$ to $V(D)$, and adds arcs $(u, w)$ and $(w, v)$.
We say arc $(u, v)$ is subdivided $q$ times if we delete arc $(u, v)$ and add a directed path of length $(q + 1)$ from $u$ to $v$. 
If digraph $D’$ is obtained from $D$ by subdividing $q$ times arc $(u,  v)$ then $\dist(u,  v,  D’) = q + 1$. 
A \emph{contraction} of an arc $(u, v) \in A(D)$ is an operation that results in a digraph $D'$ on the vertex set $V(D') = (V(D) \setminus \{u,v\}) \cup \{w\}$ with
$A(D')=\{(x, y) \mid (x, y) \in A(D) \textnormal{ and } x,y \in V(D')\} \cup
\{(x, w) \mid (x, u) \in A(D)\} \cup \{(w, y) \mid (u, y) \in A(D) \} \cup \{ (x, w) \mid
(x, v) \in A(D)\} \cup \{(w, y) \mid (v, y) \in A(D)\}$.

We refer the readers to the recent books \cite{cygan2015parameterized}, \cite{fomin2019kernelization} for the detailed introduction of Parameterized Complexity theory.
A parameterized language $\calQ$ is a subset of $\Sigma^{*} \times \mathbb{N}$, where $\Sigma$ is a finite alphabet.
The second component of a tuple $(x, k) \in \Sigma^{*} \times \mathbb{N}$ is called the \emph{parameter}.
A parameterized language is said to be fixed-parameter tractable (or \FPT) if there exists an algorithm that given a tuple $(x, k) \in \Sigma^{*} \times \mathbb{N}$, runs in $f(k)\cdot |x|^{\calO(1)}$, for some computable function $f(\cdot)$, and correctly determines whether $(x, k) \in \calQ$.
The notion of \emph{kernelization} is used to capture various forms of efficient preprocessing.
We define it in its general form.
\begin{definition}[Definition $1.5$ in \cite{fomin2019kernelization}] A \emph{polynomial compression} of a parameterized language $\calQ \subseteq \Sigma^{*} \times \mathbb{N}$ into a language $\calR \subseteq \Sigma^{*}$ is an algorithm that takes as input an instance $(x, k) \in \Sigma^{*} \times \mathbb{N}$, runs in time polynomial in $|x| + k$, and returns a string $y$ such that:
$(i)$ $|y| \le p(k)$ for some polynomial $p(\cdot)$, and
$(ii)$ $y \in \calR$ if and only if $(x, k) \in \calQ$.
\end{definition}
We need the following result regarding sparsification.
\begin{proposition}[Theorem~$4$ in \cite{jansen2017sparsification}]
\label{prop:dom-set-sparsification}
Unless \emph{$\NP \subseteq \coNP/poly$}, \textsc{Dominating Set} parameterized by the number of vertices $n$ does not admit a polynomial compression of size $\calO(n^{2 - \epsilon})$ for any $\epsilon > 0$.
\end{proposition}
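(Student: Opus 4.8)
The plan is to prove the lower bound through the \emph{cross-composition} framework of Bodlaender, Jansen, and Kratsch, in its quantitative form based on Dell and van Melkebeek. The master tool I would invoke is this: if an $\NP$-hard language $L$ admits an OR-cross-composition into a parameterized problem $\calQ$ that, given $t$ instances of $L$ each of size at most $s$, outputs a single instance of $\calQ$ whose parameter is bounded by $\calO(\sqrt{t}\cdot s^{\calO(1)})$ (a \emph{degree-$2$}, or \emph{weak $2$-}, cross-composition), then $\calQ$ admits no polynomial compression of size $\calO(k^{2-\epsilon})$ unless $\NP \subseteq \coNP/poly$. Thus it suffices to exhibit such a cross-composition into \textsc{Dominating Set} with the parameter taken to be the number of vertices $n$; the entire difficulty is concentrated in keeping $n = \calO(\sqrt{t}\cdot s^{\calO(1)})$ while faithfully encoding the logical OR of $t$ instances.

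First I would fix a convenient $\NP$-hard source, namely \textsc{Set Cover} (equivalently \textsc{Red--Blue Dominating Set}), since its yes/no semantics translate directly into domination. Using a polynomial equivalence relation I would assume that all $t$ input instances share a common universe size $u$, family size $f$, and budget $\kappa$, and I would pad $t$ up to a perfect square $p^2$ so that the instances are indexed by a grid $(i,j)\in[p]\times[p]$. The target graph $G$ would consist of (i) a shared block of \emph{element} and \emph{set} gadgets of size $\mathrm{poly}(u,f)$, and (ii) a \emph{selector} layer of $\calO(p)=\calO(\sqrt{t})$ vertices, split into $p$ \emph{row} selectors $r_1,\dots,r_p$ and $p$ \emph{column} selectors $c_1,\dots,c_p$. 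The counting observation that makes the degree bound attainable is that a graph on $n=\calO(\sqrt{t}\cdot\mathrm{poly}(u,f))$ vertices carries $\Theta(n^2)=\Theta(t\cdot\mathrm{poly})$ bits of adjacency, exactly matching the total information content of the $t$ incidence matrices, so the wiring between the selector layer and the shared block has just enough room to store all $t$ set systems at once.

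The intended semantics is a $\sqrt{t}$-cheap selection. Choosing every row selector except one, say $r_i$, and every column selector except one, say $c_j$, dominates every instance lying off row $i$ or off column $j$ \emph{for free}, since a vertex of instance $(a,b)$ is covered by a chosen selector precisely when $a\ne i$ or $b\ne j$; only instance $(i,j)$ remains to be handled by the shared gadgets. I would arrange the wiring so that, once $(i,j)$ is isolated in this way, a further $\kappa$ vertices dominate the remaining element/set gadgets if and only if the selected \textsc{Set Cover} instance is a yes-instance. Setting the budget to $K=\kappa+2(p-1)$ (plus lower-order gadget costs) then gives that $G$ has a dominating set of size at most $K$ if and only if at least one input instance is a yes-instance, which is exactly the OR-cross-composition requirement; the selector cost $2(p-1)=\calO(\sqrt{t})$ is absorbed into the parameter bound.

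The main obstacle is the design of the selector--gadget interface so that the conjunction of a row choice and a column choice singles out exactly one of the $t$ incidence matrices without \emph{cross-talk}: the adjacency pattern must guarantee that activating row $i$ and column $j$ exposes precisely the incidences of $G_{i,j}$ and no spurious element/set relations inherited from neighbouring instances, while the unchosen selectors $r_i,c_j$ and all shared gadget vertices are themselves dominated within $K$. A secondary but essential point is to forbid ``cheating'' dominating sets that ignore the intended selection (for instance, omitting selectors from two different rows), which I would enforce with standard pendant and forcing gadgets that compel every size-$K$ solution to respect the grid structure. Once these gadget-level guarantees are in place, checking the polynomial running time and the $\calO(\sqrt{t}\cdot\mathrm{poly})$ parameter bound is routine, and the master theorem then yields the claimed $\calO(n^{2-\epsilon})$ compression lower bound under $\NP \subseteq \coNP/poly$.
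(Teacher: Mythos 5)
First, a point of comparison: the paper does not prove this statement at all—it imports it verbatim as Theorem~4 of \cite{jansen2017sparsification} and uses it as a black box. So the only meaningful comparison is with the proof in that reference, which does indeed proceed via a degree-$2$ OR-cross-composition in the Dell--van Melkebeek/Bodlaender--Jansen--Kratsch framework. Your choice of master tool is therefore the right one, and indexing $t = p^2$ padded instances by a $[p]\times[p]$ grid is also the standard move.

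However, your architecture contains a genuine quantitative flaw that no gadget cleverness can repair. You place all element/set gadgets in a \emph{single shared block} of size $\mathrm{poly}(u,f)$ and let the per-instance data live in the wiring between that block and a selector layer of $\calO(p) = \calO(\sqrt{t})$ vertices. The number of potential edges at that interface is $\calO(\sqrt{t}\cdot\mathrm{poly}(u,f))$, i.e.\ only $\calO(\sqrt{t})$ bits up to polynomial factors, whereas the $t$ incidence matrices require $\Omega(t\cdot uf)$ bits. Your counting observation that the whole graph carries $\Theta(n^2) = \Theta(t\cdot\mathrm{poly})$ adjacency bits is true but does not help, because in your design the instance-dependent edges are confined to a sub-quadratic portion of the adjacency matrix; the claim that this wiring ``has just enough room to store all $t$ set systems'' is therefore false. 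Symptomatically, your semantics speaks of ``a vertex of instance $(a,b)$'' being dominated by a selector, yet in your construction no per-instance vertices can exist: $t = p^2$ instances cannot each own vertices when $n = \calO(\sqrt{t}\cdot\mathrm{poly})$. The repair—and the shape of the construction actually used in \cite{jansen2017sparsification}, as in all known degree-$2$ compositions—is to make \emph{both} sides thick: $p$ row blocks and $p$ column blocks, each of size $\mathrm{poly}(u,f)$ (say, set-gadgets in row blocks and element-gadgets in column blocks), with the incidence data of instance $(i,j)$ encoded in the edges \emph{between} row block $i$ and column block $j$, giving $(p\cdot\mathrm{poly})^2 = t\cdot\mathrm{poly}$ capacity while keeping $n = \calO(\sqrt{t}\cdot\mathrm{poly})$. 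Finally, even setting the counting aside, you explicitly defer the selector--gadget interface and the anti-cheating gadgets, which constitute essentially the entire technical content of such a composition; as it stands the proposal is a plan rather than a proof.
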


\section{Proof of Theorem~\ref{thm:dir-linear-spanner-sparsification}}
\label{sec:domset-to-dir-add-spanner}
To prove the theorem, we present a reduction from \textsc{Dominating Set} to \textsc{Directed Linear Spanner}.
In the \textsc{Dominating Set} problem, an input is an undirected graph $G$ and an integer $l$.
The objective is to decide whether there is a dominating set of size at most $l$ in $G$. 

Consider an instance $(D, f_{\alpha}, f_{\beta}, k)$ of \textsc{Directed Linear Spanner} such that $f_{\alpha}(1) + f_{\beta}(1) < 2$.
Recall that $k$ is a positive integer.
By the definition of the problem, it is safe to consider that $D$ does not have parallel arcs.
Assume that there exist a set of arcs $F \subseteq A(D)$ of size at least $k$ such that $D - F$ satisfies the properties mention in the problem statement.
For the endpoints of an arc $(u, v) \in F$, we have $\dist(u, v, D - F) \le 2 \le f_{\alpha}(\dist(u, v, D )) + f_{\beta}(\dist(u, v, D)) \le f_{\alpha}(1) + f_{\beta}(1) < 2$, which is a contradiction.
Hence, our assumption is wrong, and no such set of arcs exists.
In this case the input is a \no-instance.
This fact can be encoded in the constant bit-size.
To avoid this trivial case, we consider the functions for which $2 \le f_{\alpha}(1) + f_{\beta}(1)$.
As a technical requirement, we need $1 \le f_{\alpha}(1)$ and $f_{\alpha}, f_{\beta}$ are monotonically non-decreasing functions.


\paragraph*{Reduction}
The reduction takes as input an instance $(G, l)$ of \textsc{Dominating Set} and two monotonically non-decreasing computable functions $f_{\alpha}, f_{\beta}: \mathbb{N} \rightarrow \mathbb{R}_{\ge 0}$.
It outputs an instance $(D, f_{\alpha}, f_{\beta}, k)$ of \textsc{Directed Linear Spanner}.
Let $V(G) = \{v_1, v_2, \dots, v_{|V(G)|}\}$.
The reduction creates digraph $D$ as follows:
\begin{itemize}
\item[-] It creates the following four copies of $V(G)$: $R_{l}, R_{c}, R_{r}$, and $B$. 
For every vertex $v_i$ in $V(G)$, let $R_{l}[i], R_{c}[i], R_{r}[i], B[i]$ denote these four copies.
It adds a new vertex $w$.
\item[-] For every vertex $v_i$ in $V(G)$, it adds the following six arcs $(w, R_l[i])$, $(w, R_c[i])$, $(w, B[i])$, $(R_l[i], R_c[i])$, $(R_c[i], R_r[i])$, and $(R_r[i], B[i])$.
\item[-] For every edge $(v_i, v_j)$ in $E(G)$, it adds the following arcs: $(R_r[i], B[j])$, $(R_r[j], B[i])$.
\item[-] Let $t = f_{\alpha}(1) + f_{\beta}(1)$.
Recall that, by our assumption on the functions, $t \ge 2$.
If $\lfloor t \rfloor = 2$, then the reduction contracts arc $(R_c[i], R_r[i])$ for every $i \in [|V(G)|]$.
If $\lfloor t \rfloor = 3$, it does not modify the graph.
If $\lfloor t \rfloor \ge 4$, it subdivides $\lfloor t - 3 \rfloor$-times arc $(R_c[i], R_r[i])$ for every $i \in [|V(G)|]$.
\end{itemize}

This completes the construction of $D$.
The reduction sets $k = 2 \cdot |V(G)| - l$ and returns the instance $(D, f_{\alpha}, f_{\beta}, k)$.
See Figure~\ref{fig:reduction-illustration} for an illustration.

\begin{figure}[t]
\begin{center}
\includegraphics[scale=0.70]{./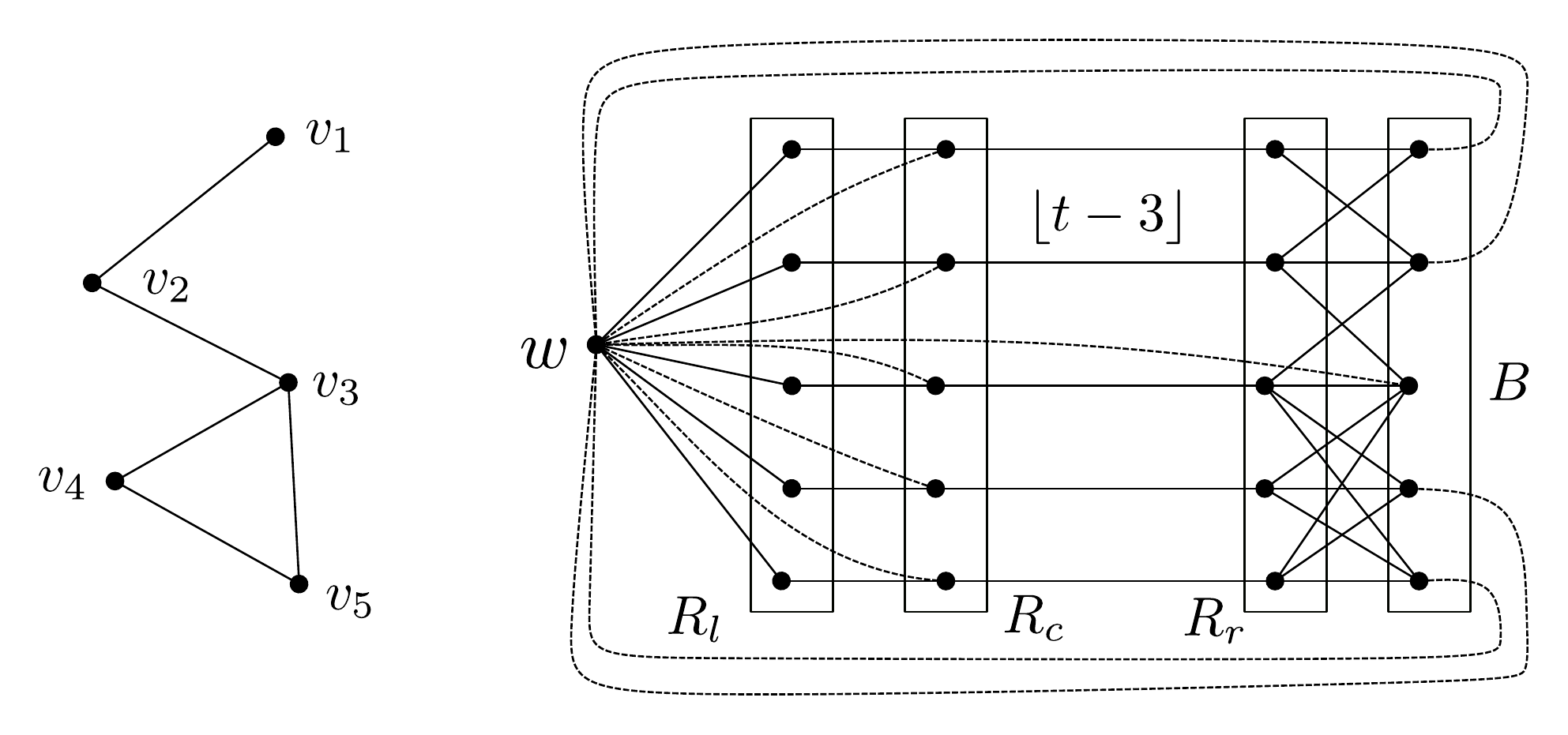}
\end{center}
\caption{Overview of the reduction. 
Left side is an input graph $G$ and right side shows the corresponding digraph $D$ constructed by the reduction.
All arcs in $D$ are directed from their left end-point to right end-point.
Every arc starting in $R_c$ and ending in $R_r$ is sub-divided $\lfloor t - 3 \rfloor$ times.
For every vertex in $V(G)$, four vertices on the same horizontal line are the vertices corresponding to it in $V(D)$.   
Only the dotted arcs can be omitted from a spanner of $D$. 
\label{fig:reduction-illustration}}
\end{figure}

\begin{lemma}
\label{lemma:forward-direction}
If $(G, l)$ is a \yes\ instance of \textsc{Dominating Set} then $(D, f_{\alpha}, f_{\beta}, k)$ is a \yes\ instance of \textsc{Directed Linear Spanner}.
\end{lemma}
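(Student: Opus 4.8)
The plan is to prove the implication directly: starting from a dominating set $X \subseteq V(G)$ with $|X| \le l$ (which I may assume satisfies $|X| = l$, since any dominating set extends to one of size exactly $l$), I will delete a prescribed set of $k = 2|V(G)| - l$ arcs from $D$ to obtain a spanning subgraph $H$, and then verify that $H$ satisfies the distance inequality for every ordered pair of vertices.

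Concretely, I delete the arc $(w, B[i])$ for every $i \in [|V(G)|]$, and the arc $(w, R_c[i])$ for every $v_i \notin X$ (in the contracted regime $\lfloor t \rfloor = 2$, I read $R_c[i]$ as the vertex resulting from contracting $(R_c[i], R_r[i])$). These are $|V(G)|$ arcs of the first kind and $|V(G)| - l$ of the second, pairwise distinct, so exactly $2|V(G)| - l = k$ arcs are removed and $|A(H)| = |A(D)| - k$, meeting the budget. These are precisely the dotted arcs of Figure~\ref{fig:reduction-illustration}; the copies $R_l[i]$ serve only to furnish the length-$2$ detour $w \to R_l[i] \to R_c[i]$ that keeps $R_c[i]$ close to $w$ after $(w, R_c[i])$ is deleted.

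For the distance check, the key simplification is that every deleted arc leaves $w$. Hence for any ordered pair $(u, v)$ with $u \ne w$ we have $\dist(u, v, H) = \dist(u, v, D)$, and since $f_{\alpha}(d) \ge 1$ and $f_{\beta}(d) \ge 0$ force $f_{\alpha}(d)\cdot d + f_{\beta}(d) \ge d$, the inequality holds trivially. It therefore suffices to bound $\dist(w, \cdot, H)$. I will classify the targets as (i) the intermediate vertices on the spine up to and including $R_r[i]$ (namely $R_l[i]$, $R_c[i]$, the subdivision vertices, and $R_r[i]$), and (ii) the sinks $B[i]$. For a target of type (i) at distance $d$ from $w$ in $D$, the path in $H$ is unchanged when $v_i \in X$ and lengthens by exactly one unit (through the $R_l$-detour) when $v_i \notin X$, so $\dist(w, \cdot, H) \le d + 1$. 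For type (ii), a non-dominator $v_i$ is reached through a dominator $v_j$ via the retained arcs $(w, R_c[j])$ and $(R_r[j], B[i])$, yielding a path of length exactly $\lfloor t \rfloor$ in all three regimes.

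The technical heart is then two inequalities. For type (ii), since $\dist(w, B[i], D) = 1$, the admissible bound is $f_{\alpha}(1) + f_{\beta}(1) = t \ge \lfloor t \rfloor = \dist(w, B[i], H)$. For type (i), I will establish once and for all that $f_{\alpha}(d)\cdot d + f_{\beta}(d) \ge d + 1$ for every $d \ge 1$: using monotonicity and $f_{\alpha}(1) \ge 1$ one gets $f_{\alpha}(d)\cdot d + f_{\beta}(d) \ge f_{\alpha}(1)\cdot d + f_{\beta}(1) \ge d + (f_{\alpha}(1) + f_{\beta}(1)) - 1 = d + t - 1 \ge d + 1$, where the middle step is $(d-1)(f_{\alpha}(1)-1) \ge 0$ and the last uses $t \ge 2$. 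This absorbs the extra unit of length incurred by non-dominators. The main obstacle I expect is exactly this bookkeeping across the three regimes of $\lfloor t \rfloor$ — in particular checking that the subdivision count $\lfloor t - 3 \rfloor$ lines up so that every $B[i]$ is reached in precisely $\lfloor t \rfloor$ steps while no intermediate target ever exceeds $d+1$ — but once the two inequalities above are in hand, the cases collapse into a single uniform verification, and this is precisely where the hypotheses $f_{\alpha}(1) \ge 1$ and $t \ge 2$ are indispensable.
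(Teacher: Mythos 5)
Your proposal is correct and follows essentially the same route as the paper: the same arc set $F$ (all $(w,B[i])$ plus $(w,R_c[i])$ for non-dominators), the same observation that only arcs out of $w$ are deleted so only distances from $w$ need checking, the same $d+1 \le f_{\alpha}(d)\cdot d + f_{\beta}(d)$ inequality via $(d-1)(f_{\alpha}(1)-1)\ge 0$, and the same length-$\lfloor t\rfloor$ detour through a dominator to reach each $B[i]$. The only cosmetic difference is that you normalize $|X|=l$ to remove exactly $k$ arcs, whereas the paper just notes $|F|\ge k$.
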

\begin{proof}
Let $X$ be a dominating set of size at most $l$ in $G$.
Consider the subset of arcs $F \subseteq A(D)$ that contains all the arcs of the form $(w, B[i])$ and arcs $(w, R_c[i])$ corresponding to every vertex \emph{not} in $X$.
Formally, 
$$F := \{(w, B[i]) \ |\ \forall\ i \in [|V(G)|]\}\ \cup\ \{(w[i], R_c[i])\ |\ \forall i \in [|V(G)|]\  s.t.\ v_i \not\in X\}.$$
Note that $|F| \ge |V(G)| + |V(G)| - l \ge k$.
We argue that $D - F$ satisfies the properties mentioned in the definition of \textsc{Directed Linear Spanner}.

It is easy to see that for any vertex $u \in V(D) \setminus \{w\}$ and vertex $v \in V(D) \setminus B$, if there is a directed path from $u$ to $v$ in $D$ then the same path is also present in $D - F$.
Hence, for all pairs of such vertices, we have $\dist(u, v, D - F) = \dist(u, v, D)$.

Consider the case when $u = w$ and $v \in V (D)\setminus B$.
It is easy to verify that if $\dist(u, v, D) = d$, then $\dist(u, v, D - F) \le d + 1$.
Hence, it is sufficient to argue that $d + 1 \le  f_{\alpha}(d) \cdot d + f_{\beta}(d)$ for every $d \ge 1$.
Recall that $1 \le f_{\alpha}(1)$ and $2 \le f_{\alpha}(1) + f_{\beta}(1)$.
For $d \ge 1$, this implies that $(d - 1) + 2 \le (d - 1) \cdot f_{\alpha}(1) + f_{\alpha}(1) + f_{\beta}(1) \le d \cdot f_{\alpha}(1) + f_{\beta}(1)$.
As $f_{\alpha}, f_{\beta}$ are monotonically non-decreasing functions and $d \ge 1$, we have
$d + 1 \le d \cdot f_{\alpha}(d) + f_{\beta}(d)$.
This implies that $\dist(u, v, D - F) \le f_{\alpha}(\dist(u, v, D)) \cdot \dist(u, v, D) + f_{\beta}(\dist(u, v, D))$.

For any $i \in [|V(G)|]$ there are following three directed paths that starts at $w$ and ends at $B[i]$:
$(i)$ $\langle w, B[i]\rangle$, 
$(ii)$ $\langle w, R_{c}[j]\rangle \circ \langle R_{c}[j], \dots, R_{r}[j]\rangle \circ \langle R_{r}[j], B[i]\rangle$, and
$(iii)$ $\langle w, R_{l}[j], R_{c}[j]\rangle \circ \langle R_{c}[j], \dots, R_{r}[j]\rangle \circ \langle R_{r}[j], B[i]\rangle$.
For the last two types of directed paths, $j \in [|V(G)|]$ such that either $i = j$ or $(v_i, v_j) \in E(G)$.
The lengths of these three types of directed paths are $1$, $2 + \lfloor t - 2 \rfloor$, and $3 + \lfloor t  - 2 \rfloor$, respectively.
As $X$ is a dominating set, for any $i \in [|V(G)|]$, at least one directed path of the second type is present in $D - F$.
This implies $\dist(w, B[i], D - F) =  2 + \lfloor t - 2 \rfloor \le t = f_{\alpha}(1) + f_{\beta}(1) = f_{\alpha}(\dist(w, B[i], D)) \cdot \dist(w, B[i], D) + f_{\beta}(\dist(w, B[i], D))$.

This implies $D - F$ satisfies the properties mentioned in the problem definition of \textsc{Directed Linear Spanner}. 
Hence, if $(G, \ell)$ is a \yes\ instance of \textsc{Dominating Set} then $(D, f_{\alpha}, f_{\beta}, k)$ is a \yes\ instance of \textsc{Directed Linear Spanner}.
\end{proof}

\begin{lemma}
\label{lemma:backward-direction}
If $(D, f_{\alpha}, f_{\beta}, k)$ is a \yes\ instance of \textsc{Directed Linear Spanner} then $(G, l)$ is a \yes\ instance of \textsc{Dominating Set}.
\end{lemma}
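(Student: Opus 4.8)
The plan is to prove the contrapositive-flavoured statement structurally rather than by manipulating an arbitrary deletion set directly. Given a spanner $H = D - F$ satisfying the problem's inequality, with $|F| \ge k = 2\cdot|V(G)| - l$, I would first pin down \emph{which} arcs of $D$ can possibly lie in such an $F$. The claim is that these are exactly the $2\cdot|V(G)|$ dotted arcs of Figure~\ref{fig:reduction-illustration}, namely $\{(w, R_c[i])\}_i$ and $\{(w, B[i])\}_i$; every other arc is \emph{forced}, in the sense that deleting it sends some finite distance in $D$ to $+\infty$ and hence violates the spanner inequality (whose right-hand side is a finite value for every pair at finite distance in $D$). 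Once this dichotomy is established, the lemma collapses to a short counting argument.

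To show the non-dotted arcs are forced I would argue vertex by vertex. The arc $(w, R_l[i])$ is the unique in-arc of $R_l[i]$ and $(R_l[i], R_c[i])$ its unique out-arc, so deleting either isolates $R_l[i]$ as a sink-reachable or source-reachable vertex. Every internal vertex of the (possibly subdivided) path from $R_c[i]$ to $R_r[i]$ has in- and out-degree one, so the entire path — the only route by which $R_r[i]$ is reachable from $w$ — is forced. Finally, each out-neighbour of $R_r[i]$ lies in $B$ and all $B$-vertices are sinks, so every arc $(R_r[i], B[i])$ and every edge-arc $(R_r[i], B[j])$ is the unique path realizing a finite distance to its head, and is therefore forced. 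Consequently any admissible $F$ consists solely of dotted arcs, whence $|F| \le 2\cdot|V(G)|$.

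Now I would set $S = \{i : (w, R_c[i]) \notin F\}$ and $T = \{i : (w, B[i]) \notin F\}$, the kept indices of the two families. Since $F$ contains only dotted arcs, $|F| = (|V(G)| - |S|) + (|V(G)| - |T|)$, so the hypothesis $|F| \ge 2\cdot|V(G)| - l$ yields $|S| + |T| \le l$. The crux is the budget for reaching each $B[i]$: as $\dist(w, B[i], D) = 1$, the bound equals $t := f_{\alpha}(1) + f_{\beta}(1)$, and the three path types from $w$ to $B[i]$ have lengths $1$, $\lfloor t \rfloor$ (type (ii), through some $R_c[j] \to \cdots \to R_r[j] \to B[i]$ with $j = i$ or $v_j \in N_G(v_i)$), and $\lfloor t \rfloor + 1$ (type (iii), prefixed by $R_l[j]$). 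Because $\lfloor t \rfloor \le t < \lfloor t \rfloor + 1$, a type-(iii) detour never fits the budget, so $\dist(w, B[i], H) \le t$ forces either $i \in T$ (the direct arc survives) or some $j \in S$ with $v_j \in N_G[v_i]$ (a type-(ii) path survives, which requires $(w, R_c[j])$ kept). Reading this over all $i$, the vertex set $W = \{v_i : i \in S\} \cup \{v_i : i \in T\}$ dominates $G$, and $|W| \le |S| + |T| \le l$, so $(G, l)$ is a \yes\ instance.

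The step I expect to be the main obstacle is the length bookkeeping that keeps the type-(ii)/type-(iii) separation robust across the three regimes of the construction ($\lfloor t \rfloor = 2$ with a contraction, $\lfloor t \rfloor = 3$ untouched, $\lfloor t \rfloor \ge 4$ with $\lfloor t - 3 \rfloor$ subdivisions): in each case one must check that the surviving short route has length exactly $\lfloor t \rfloor \le t$ while the $R_l$-detour costs one more and overshoots, and that there is genuinely no other $w$-to-$B[i]$ path to exploit. This is precisely where the standing assumptions $1 \le f_{\alpha}(1)$ and $2 \le t$, together with the monotonicity of $f_{\alpha}, f_{\beta}$, are consumed; I would therefore keep the floor arithmetic fully explicit at this point and treat the forcing argument and the final counting as routine once the path lengths are nailed down.
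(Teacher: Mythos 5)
Your proof is correct and follows essentially the same route as the paper's: first force every non-dotted arc into the spanner (deleting it would make some finite distance infinite), then use the fact that the $R_l$-detour has length $\lfloor t\rfloor + 1 > t$ while the type-(ii) path has length $\lfloor t\rfloor \le t$, so each $B[i]$ must be served either by its direct arc or by a kept arc $(w, R_c[j])$ with $v_j \in N_G[v_i]$. The only cosmetic difference is that you take the dominating set to be all kept indices $S \cup T$ and count $|S|+|T| \le l$ directly, whereas the paper takes $X \cup (V(G)\setminus N_G[X])$ and then shows the second part is contained in the kept $B$-indices; the arithmetic is the same.
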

\begin{proof}
Let $F \subseteq A(D)$ be the set of arcs in $A(D)$ such that $D - F$ satisfies the properties mentioned in the definition of \textsc{Directed Linear Spanner}.
We argue that arcs in $F$ are of the form $(w, R_c[i])$ or $(w, B[i])$ for some $i \in [|V(D)|]$.
For any arc $(u, v)$ in $A(D)$ that is not of the above forms, there is no directed path from $u$ to $v$ in digraph $D - (u, v)$.
Hence, if $(u, v)$ is in $F$, then $\dist(u, v, D - F) = \infty$ which is not upper bounded by $f_{\alpha}(\dist(u, v, D)) \cdot \dist(u, v, D) + f_{\beta}(\dist(u, v, D))$.
This contradicts the fact that $G - F$ satisfies the properties mentioned in the definition of the problem.

We say arc $(w, R_c[i])$ corresponds to vertex $v_i$ in $V(G)$ for every $i \in [|V(G)|]$.
Consider the subset $X$ of vertices in $V(G)$ whose corresponding to arcs are \emph{not} in $F$.
Formally, $X := \{v_i\ |\ i \in [|V(G)|] \textrm{ and } (w, R_c[i]) \not\in F\}$.
We argue that set $X \cup (V(G) \setminus N[X])$ is a dominating set of size at most $l$ in $G$.

We claim that for a vertex $v_i \in V(G) \setminus N_G[X]$, set $F$ does not contain arc $(w, B[i])$.
Assume, for the sake contradiction, that $(w, B[i])$ is in $F$.
By the construction of $X$, arc $(w, R_c[i])$ is present in $F$.
Any directed path from $w$ to $B[i]$ in $D - F$ is either of the form $\langle w, R_{c}[j]\rangle \circ \langle R_{c}[j], \dots, R_{r}[j]\rangle \circ \langle R_r[j], B[i]\rangle$ or $\langle w, R_{l}[j], R_{c}[j]\rangle \circ \langle R_{c}[j], \dots R_{r}[j]\rangle \circ \langle R_r[j], B[i]\rangle$ for some $j \in [|V(G)|]$ such that $(v_i, v_j) \in E(G)$.
As $v_i \in V(G) \setminus N_G[X]$, for every $j \in [|V(G)|]$ such that $(v_i, v_j) \in E(G)$, we have $v_j \not\in X$.
Hence, arc $(w, R_{c}[j])$ is in $F$ and not present in $D - F$.
This implies any directed path from $w$ to $B[i]$ is of the second form.
Recall that the length of such directed path is $3 + \lfloor t - 2\rfloor$.
This implies $3 + \lfloor t - 2\rfloor \le f_{\alpha}(\dist(w, B[i], D)) \cdot \dist(w, B[i], D) + f_{\beta}(\dist(w, B[i], D)) \le f_{\alpha}(1) \cdot 1 + f_{\beta}(1) = t$, which is a contradiction.
Hence, our assumption was wrong and for any $v_i \in V(G) \setminus N_G[X]$, set $F$ does not contain arc $(w, B[i])$.

Define set $Y = V(G) \setminus N_G[X]$.
Note that $X \cup Y$ dominates every vertex in $V(G)$.
It remains to argue the bound on the size of $X \cup Y$.
By the definition of $X$, set $F$ contains $|V(G)| - |X|$ many arcs of the form $(w, R_c[i])$ for some $i \in [|V(G)|]$.
By the definition of $Y$, set $F$ contains $|V(G)| - |Y|$ many arcs of the form $(w, B[i])$ for some $i \in [|V(G)|]$.
As $|F| \ge 2|V(G)| - l$, we have $|X| + |Y| \le l$.
This implies there is a dominating set of size at most $l$ in $G$, and hence $(G, l)$ is a \yes\ instance of \textsc{Dominating Set}. 
This concludes the proof of the lemma.
\end{proof}

\begin{proof}(\emph{of Theorem~\ref{thm:dir-linear-spanner-sparsification}})
Assume, for the sake of contradiction, that there is an algorithm $\calA$ that given a constant $\epsilon > 0$ and an instance $(D_1, f_{\alpha}, f_{\beta}, k)$ of \textsc{Directed Linear Spanner}, where $D_1$ is a directed acyclic graph, runs in polynomial time and computes its polynomial compression of size $\calO(|V(D_1)|^{2 - \epsilon})$.

Consider the following algorithm $\calB$ that takes as input an instance $(G, l)$ of \textsc{Dominating Set} and returns its polynomial compression.
Algorithm $\calB$ runs the reduction mentioned above as a subroutine with input $(G, l)$ and functions $f_{\alpha}, f_{\beta}$.
Let $(D, f_{\alpha}, f_{\beta}, k)$ be the equivalent instance returned by the reduction.
It is easy to verify that $D$ is a directed acyclic graph.
Algorithm~$\calB$ uses Algorithm~$\calA$ as a subroutine to obtain a polynomial compression of size $\calO(|V(D)|^{2 - \epsilon})$.
It then returns this as polynomial compression for $(D, l)$.
This completes the description of Algorithm~$\calB$. 

Algorithm~$\calB$ runs in polynomial time and its correctness follows from Lemma~\ref{lemma:forward-direction}, Lemma~\ref{lemma:backward-direction}, and the correctness of Algorithm~$\calA$.
By the description of Algorithm~$\calB$, the number of vertices in $D$ is at most $4 \cdot |V(G)| + |V(G)| \cdot (f_{\alpha}(1) + f_{\beta}(1)) + 1 \in \calO(|V(G)|)$. 
This implies Algorithm~$\calB$ computes a polynomial compression of \textsc{Dominating Set} of size $\calO(|V(G)|^{2 - \epsilon})$.
But, this contradicts Proposition~\ref{prop:dom-set-sparsification}.
Hence, our assumption was wrong and no such algorithm exists.
This concludes the proof of the theorem.
\end{proof}

\section{Conclusion}
\label{sec:conclusion}
In this article, we proved that unless $\NP \subseteq \coNP/poly$, \textsc{Directed Linear Spanner} parameterized by the number of vertices $n$ admits no generalized kernel of size $\calO(n^{2 - \epsilon})$ for any $\epsilon > 0$.
This lower bound holds even when input is a directed acyclic graph and $\alpha, \beta$ are \emph{any} computable functions of the distance being approximated.
Abboud and Bodwin~\cite{abboud20174} proved that unconditional  sparsification lower bound for undirected graphs with weaker constants in the exponent.
It will be interesting to investigate whether their lower bound can be strengthen in case of directed graphs.

We can extend our result to more generalized problem at the cost stronger condition on the error function.
Consider the generalization of the problem, called \textsc{Directed Spanner}, in which the error function $f : \mathbb{N} \rightarrow \mathbb{R}_{\ge 0}$ is not restricted to be linear in terms of the distance.
Using the identical arguments, it is not hard to see that if $d + 1 \le f(d)$ for every $d \ge 1$, then the \textsc{Directed Spanner} problem does not admit a polynomial compression of size $\calO(|V(D)|^{2 - \epsilon})$ for any $\epsilon > 0$, even when $D$ a directed acyclic graph.
Note that this condition, i.e. $d + 1 \le f(d)$ for every $d \ge 1$, is stronger than the conditions, i.e. $1 \le f_{\alpha}(1)$, $2 \le f_{\alpha}(1) + f_{\beta}(1)$, and both $f_{\alpha}, f_{\beta}$ are monotonically non-decreasing, we used to prove the similar result for \textsc{Directed Linear Spanner}.



\bibliographystyle{elsarticle-num} 
\bibliography{references.bib}





\end{document}